\title{Optimal bounds for periodic mixtures of ferromagnetic interactions}
\author{Andrea Braides\\ \\Dipartimento di Matematica, Universit\`a di Roma Tor Vergata
\\ via della ricerca scientifica 1, 00133 Roma, Italy}
\date{}                                           
\def\Z{\mathbb Z}
\def\N{\mathbb N}
\def\R{\mathbb R}
\def\e{\varepsilon}
\begin{document}
\maketitle
\newtheorem{lemma}{Lemma}
\newtheorem{corollary}[lemma]{Corollary}
\newtheorem{proposition}[lemma]{Proposition}
\newtheorem{remark}[lemma]{Remark}
\newtheorem{example}[lemma]{Example}
\newtheorem{theorem}[lemma]{Theorem}
\newtheorem{definition}[lemma]{Definition}

\section{Introduction}
In this paper we give optimal bounds for the homogenization of periodic Ising systems of the
form
$$\sum_{ij}c_{ij}(u_i-u_j)^2
$$
where $u_i\in \{-1,1\}$, the sum runs over all nearest neighbours in a square lattice,
and the bonds $c_{ij}$ may take two positive values $\alpha$ and $\beta$ with 
$$
\alpha<\beta.
$$
Such bounds are given in terms of the {\em volume fraction} (proportion) $\theta$ of $\beta$-bonds
as follows.
To each such system we associate a {\em homogenized surface tension} $\varphi$.
We show that all possible such $\varphi$ are the (positively homogeneous of degree one) convex functions
such that
\begin{equation}\label{a}
\alpha(|\nu_1|+|\nu_2|)\le \varphi(\nu)\le c_1|\nu_1|+c_2|\nu_2|,
\end{equation}
where the coefficients $c_1$ and $c_2$ satisfy
\begin{equation}\label{b}
c_1\le \beta,\quad c_2\le \beta,\qquad c_1+c_2= 2(\theta\beta+(1-\theta)\alpha).
\end{equation}

The continuous counterpart of this problem is the determination of optimal bounds
for Finsler metrics  obtained from the homogenization of periodic Riemannian
metrics (see \cite{AcBu,BDF,BBF}) of the form
$$
\int_a^b a(u(t))|u'|^2dt,
$$
and $a(u)$ is a periodic function in $\R^2$ taking only the values $\alpha$ and $\beta$.
This problem has been studied in \cite{DP}, where it is shown that homogenized metrics
satisfy
$$
\alpha\le \varphi(\nu)\le (\theta\beta+(1-\theta)\alpha),
$$
but the optimality of such bounds is not proved.
A `dual' equivalent formulation in dimension $2$ is obtained by considering the homogenization of 
periodic perimeter functionals of the form
$$
\int_{\partial A}a(x) \,d{\cal H}^1(x)
$$
with the same type of $a$ as above (see \cite{AB,AI}). The corresponding $\varphi$ in this case 
can be interpreted as the homogenized surface tension of the homogenized perimeter functional.

The discrete setting allows to give a (relatively) easy description of the optimal bounds in
a way similar to the treatment of mixtures of linearly elastic discrete structures \cite{BF}.
The bounds obtained by sections and by averages in the elastic case have as counterpart 
{\em bounds by projection}, where the homogenized surface tension is estimated from below by
considering the minimal value of the coefficient on each section, and {\em bounds by averaging},
where coefficients on a section are substituted with their average. The discrete setting allows to
construct (almost-)optimal periodic geometries, which optimize one type or the other of bound in each direction. We shortly describe the `extreme' geometries in Fig.~\ref{series} and Fig.~\ref{parallel}, where $\alpha$-connections are represented as dotted lines, $\beta$-connections are represented as solid lines, and the nodes with the value $+1$ or $-1$ as white circles or black circles, respectively.  In Fig.~\ref{series} there are pictured the periodicity cell of a mixture giving as a result the lower bound $\alpha(|\nu_1|+|\nu_2|)$ and an interface with minimal energy.  
\begin{figure}[h]
\centerline{\includegraphics[width=.25\textwidth]{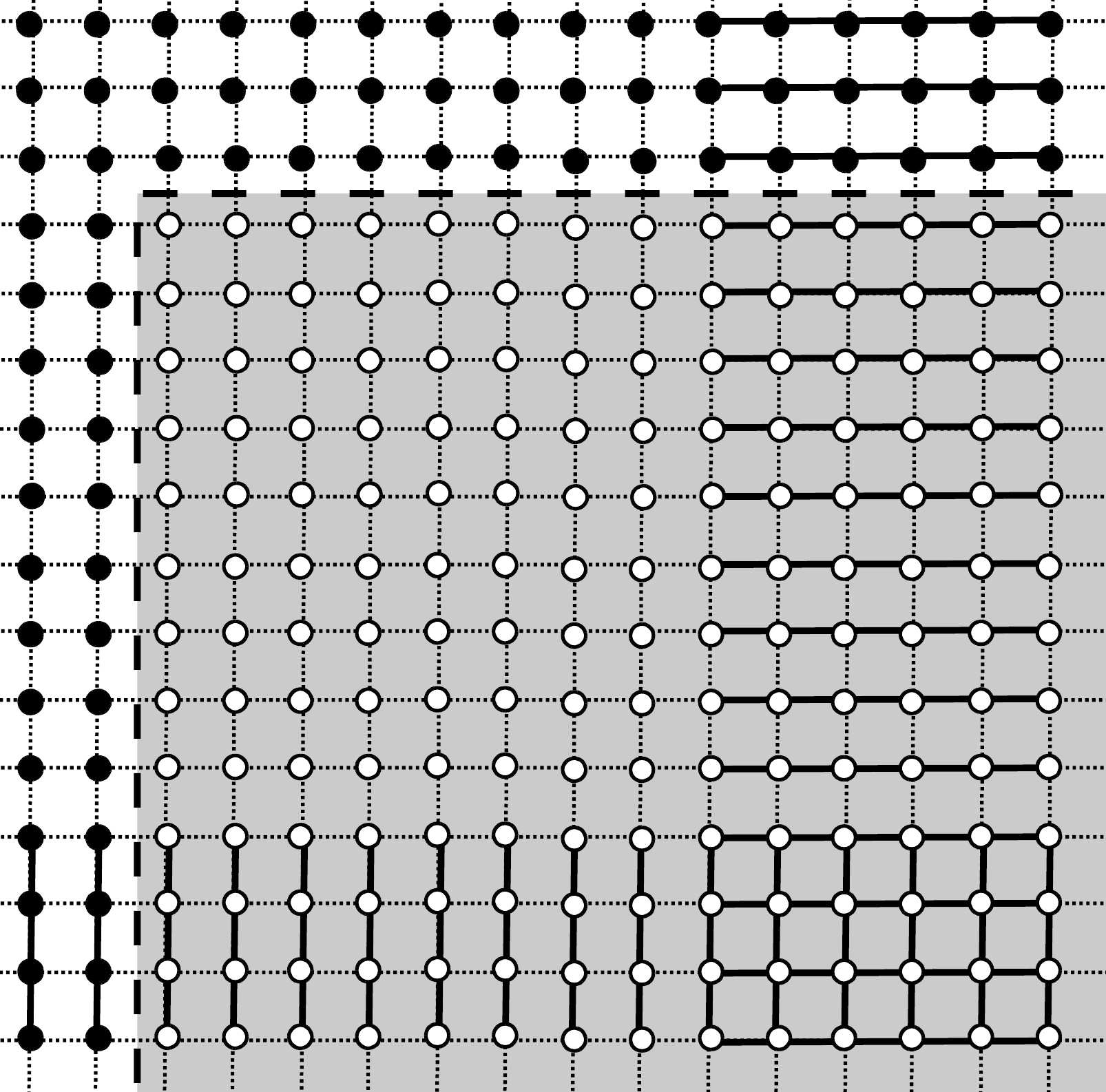}}
 \caption{periodicity cell for a mixture giving the lower bound}\label{series}
 \end{figure}
 Fig.~\ref{parallel} represents the periodicity cell of a mixture giving a upper bound of the form $c_1|\nu_1|+c_2|\nu_2|$. Note that the interface pictured in that figure crosses exactly
a number of bonds proportional to the percentage $\theta_v$ of $\beta$-bonds in the horizontal direction.
 \begin{figure}[h]
\centerline{\includegraphics[width=.25\textwidth]{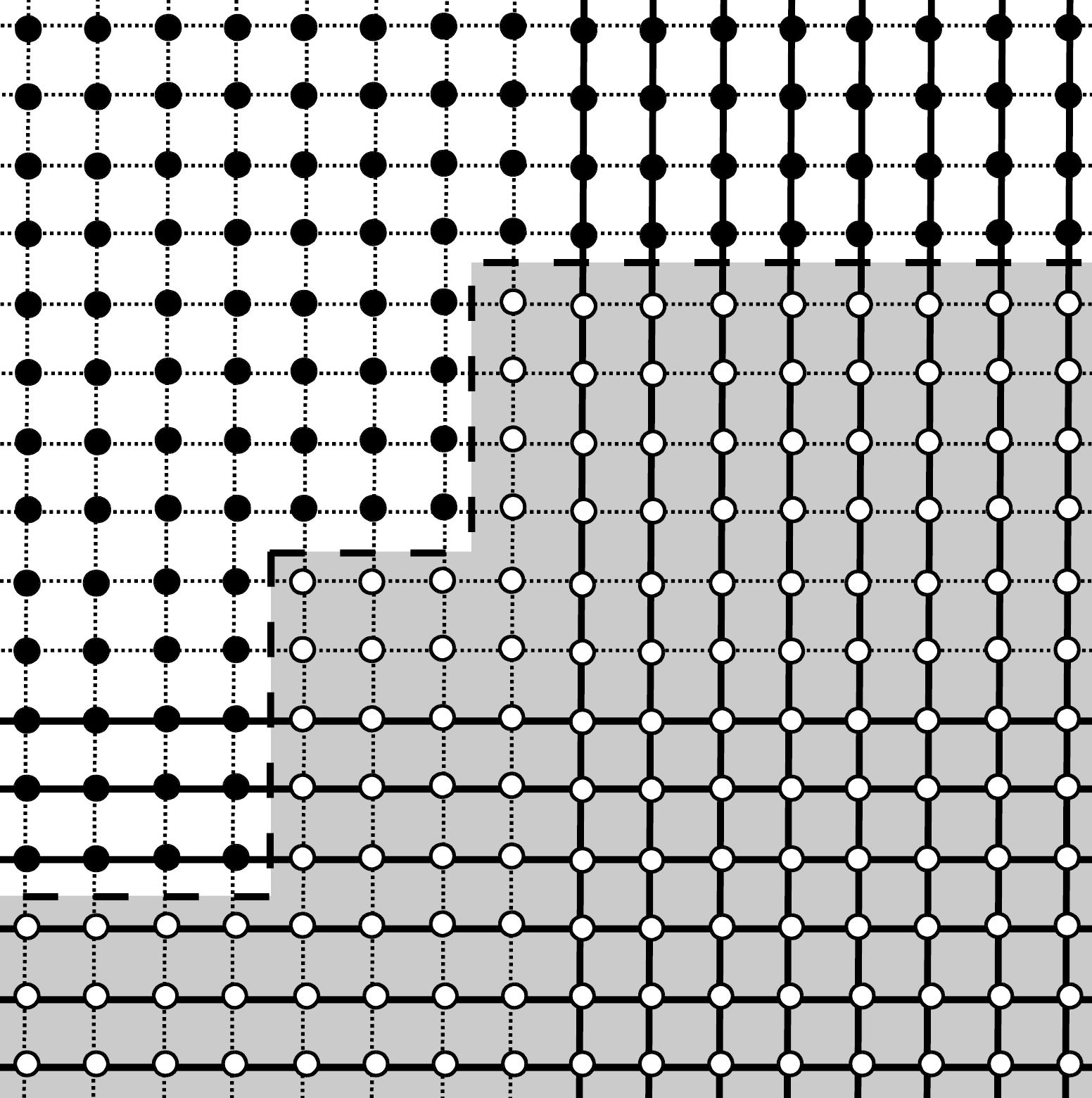}}
 \caption{periodicity cell for a mixture giving an upper bound}\label{parallel}
 \end{figure}

It must be noted that, contrary to the elastic case, the bounds (i.e., the sets of possible $\varphi$) are increasing with $\theta$, and in particular they always contain the minimal surface tension $\alpha(|\nu_1|+|\nu_2|)$. 

We can picture the bounds in terms of their Wulff shape; i.e.,
the solutions $A_\varphi$ centered in $0$ to the problem
$$
\max\Bigl\{ |A|: \int_{\partial A} \varphi(\nu)d{\cal H}^1(x)=1\Bigr\}.
$$
If $\varphi(\nu)= c_1|\nu_1|+c_2|\nu_2|$ then such Wulff shape is the rectangle centered in $0$ with one vertex in $(1/(8c_2),1/(8c_1))$. A general $\varphi$ satisfying (\ref{a}) and (\ref{b}) corresponds to a convex symmetric set contained in the square of side length
$1/(4\alpha)$ (that is, the Wulff shape corresponding to $\alpha(|\nu_1|+|\nu_2|)$) and containing one of such rectangles for $c_1$ and $c_2$ satisfying (\ref{b}). The envelope of the vertices of such rectangles lies in the curve
\begin{equation}\label{bondeq}
{1\over |x_1|}+{1\over |x_2|}= 16(\theta\beta+(1-\theta)\alpha)
\end{equation}
(see Fig.~\ref{bound1}).
\begin{figure}[h]
\centerline{\includegraphics[width=.50\textwidth]{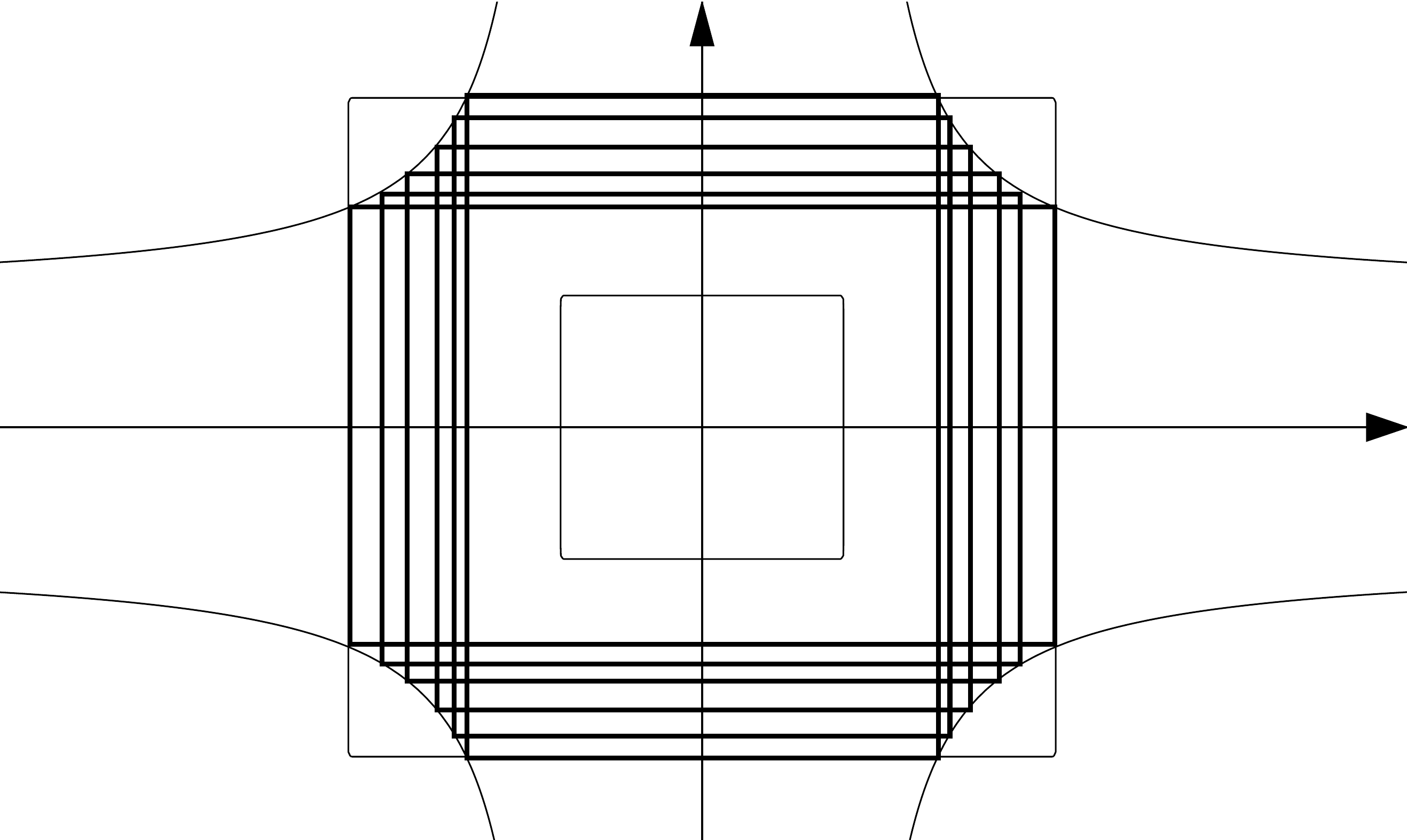}}
 \caption{envelope of rectangular Wulff shapes}\label{bound1}
 \end{figure}
In terms of such envelope, we can describe the Wulff shapes of $\varphi$ as follows:
if $\theta\le1/2$ then it is any symmetric convex set contained in the square of side length
$1/(4\alpha)$ and intersecting the four portions of the set of points satisfying (\ref{bondeq})
contained in that square (see Fig.~\ref{bound2}(a));
if $\theta\ge1/2$ then it is any symmetric convex set contained in the square of side length
$1/(4\alpha)$ and intersecting the four portions of the set of points satisfying (\ref{bondeq})
with $|x_1|\ge 1/(8\beta)$ and $|x_2|\ge 1/(8\beta)$ contained in that square (see Fig.~\ref{bound2}(b)). This second condition is automatically satisfied if $\theta\le 1/2$.
\begin{figure}[h]
\centerline{\includegraphics[width=.30\textwidth]{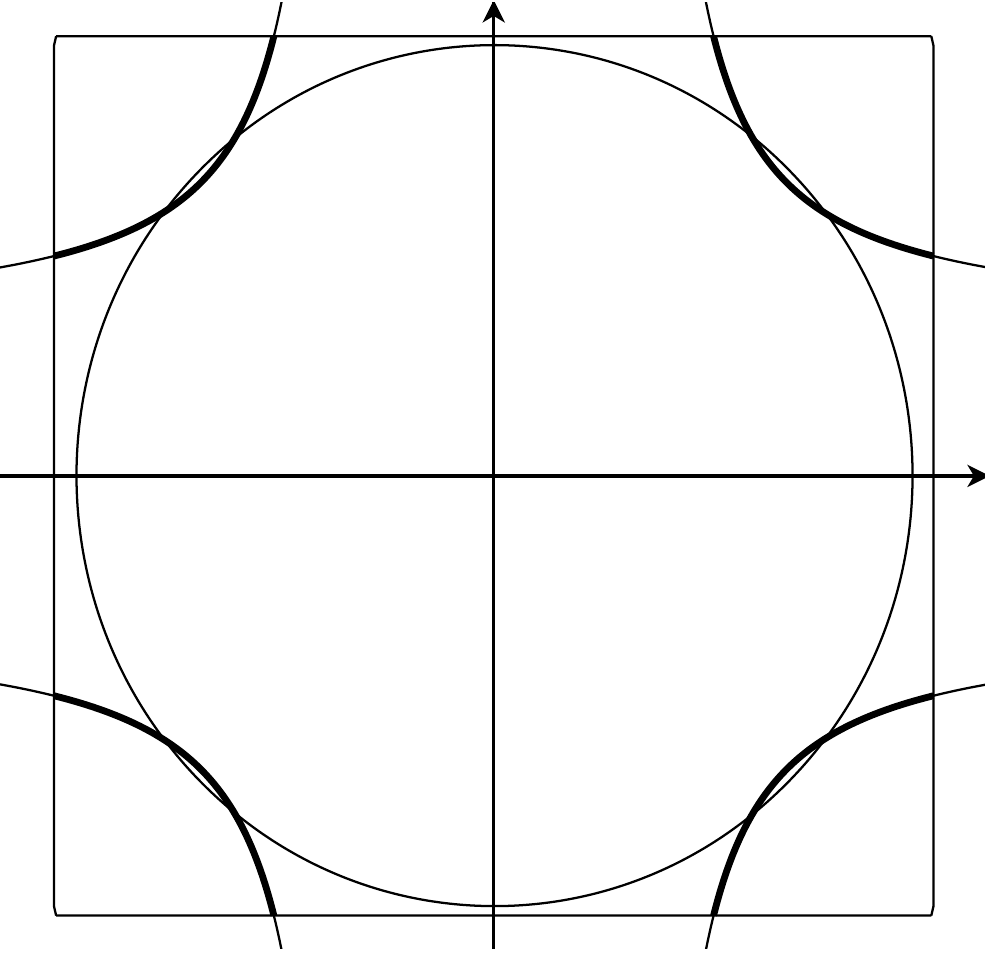}\qquad\qquad\qquad \includegraphics[width=.30\textwidth]{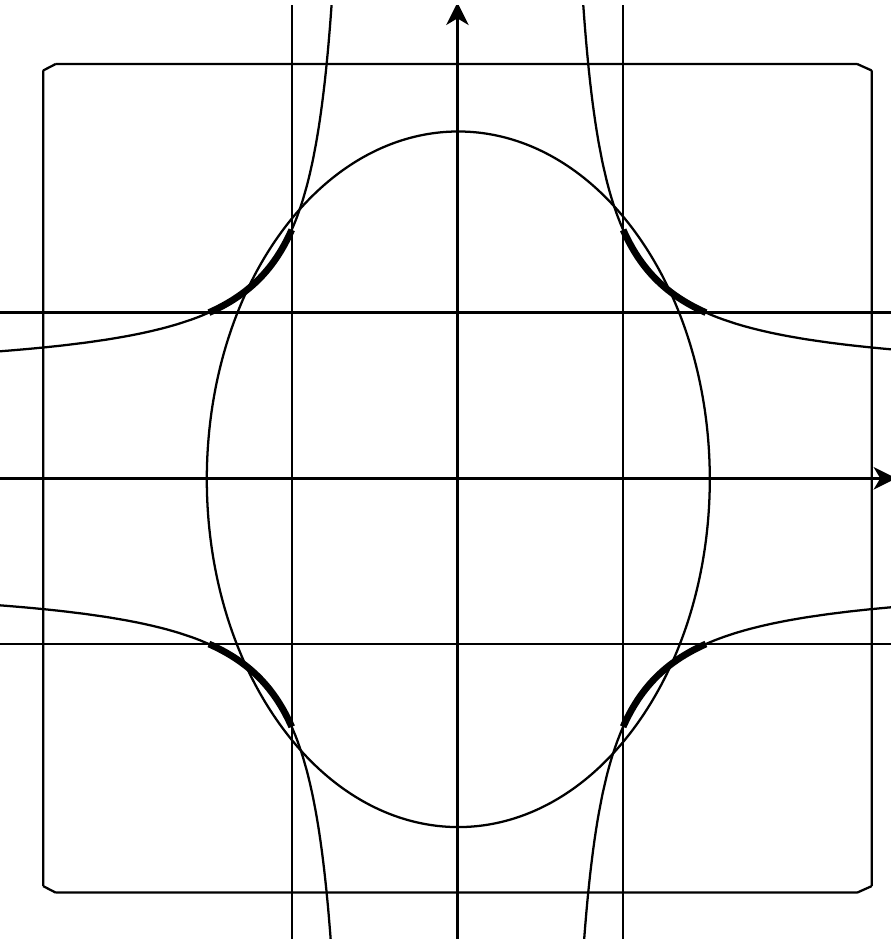}}
 \caption{possible Wulff shapes with: (a) $\theta\le 1/2$ and  (b) $\theta\ge 1/2$}\label{bound2}
 \end{figure}
 
\section{Setting of the problem}
We consider a discrete system of nearest-neighbour interactions in dimension two
with coefficients $c_{ij}=c_{ji}\ge 0$, $i,j\in\Z^2$. The corresponding ferromagnetic spin energy is
\begin{equation}
F(u)=\sum_{ij} c_{ij} (u_i-u_j)^2,
\end{equation}
where $u:\Z^2\to\{-1,1\}$, $u_i=u(i)$, and the sum runs over 
the set of {\em nearest neighbours} or {\em bonds} in $\Z^2$, which is denoted by
$$
{\cal Z}=\{(i,j)\in \Z^2\times \Z^2: |i-j|=1\}.
$$
Such energies correspond to inhomogeneous surface energies on the continuum \cite{ABC,BP}.

\begin{definition} Let $\{c_{ij}\}$ be indices as above with $\inf_{ij}c_{ij}>0$ and periodic; i.e., 
such that there exists $T\in \N$ such that 
$$
c_{(i+T)\,j}= c_{i\,(j+T)}= c_{ij}.
$$
Then, we define the {\em homogenized energy density} of $\{c_{ij}\}$ as the convex positively homogeneous function of degree one $\varphi:\R^2\to[0,+\infty)$ such that for all $\nu\in S^1$ we have
\begin{equation}\label{phi}
\varphi(\nu)=\lim_{R\to+\infty}\inf\Bigr\{ {1\over R}\sum_{n=1}^{N} c_{i_{n}j_{n}}: i_N-i_0=\nu^\perp R+o(R)\Bigr\}.
\end{equation}
The infimum is taken over all {\em paths of bonds}; i.e., pairs $(i_n, j_n)$ such that the unit segment
centred in ${i_n+j_n\over 2}$ and orthogonal to $i_n-j_n$ has an endpoint in common with  the unit segment centred in ${i_{n-1}+j_{n-1}\over 2}$ and orthogonal to $i_{n-1}-j_{n-1}$.
This is a good definition thanks to \rm\cite{BP}. 
\end{definition}

\begin{remark}\label{gamma}
\rm The definition above can be interpreted in terms of a passage from a discrete to a continuous description as follows. We consider the scaled energies
$$
E_\e(u)={1\over 8} \sum_{ij}\e c^\e_{ij}(u_i-u_j)^2,
$$
where $u:\e\Z^2\to\{-1,1\}$, the factor $1/8$ is a normalization factor, the sum runs over nearest neighbours in $\e\Z^2$, and
$$
c^\e_{ij}= c_{{i\over\e},{j\over\e}}.
$$
Upon identifying $u$ with its piecewise-constant interpolation, we can regard these energies as defined 
on $L^1(\R^2)$. Their $\Gamma$-limit in that space is infinite outside $BV_{\rm loc}(\R^2,\{\pm1\})$, 
where it has the form
$$
F_{\varphi}(u)=\int_{\partial\{u=1\}}\varphi(\nu)d{\cal H}^1
$$
with $\varphi$ as above.
\end{remark}

\bigskip\noindent
{\bf Periodic mixtures of two types of bonds.} We will consider the case when
\begin{equation}\label{abe}
\hbox{ $c_{ij}\in\{\alpha, \beta\}$ with $0<\alpha<\beta$;}
\end{equation}
If we have such coefficients, we define the {\em volume fraction} of $\beta$-bonds as 
\begin{equation}
\theta(\{c_{ij}\})={1\over 4T^2}\#\Bigl\{(i,j)\in {\cal Z}: {i+i\over 2}\in [0,T)^2,\  c_{ij}=\beta\Bigr\}.
\end{equation}

\begin{definition}\label{dephi}
Let $\theta\in [0,1]$. The set of {\em homogenized energy densities
of mixtures of $\alpha$ and $\beta$ bonds, with volume fraction $\theta$} (of $\beta$ bonds) 
is defined as
\begin{eqnarray}
{\bf H}_{\alpha,\beta}(\theta)&=&\nonumber
\Bigl\{ \varphi:\R^2\to [0,+\infty): \hbox{ there exist } \theta_k\to \theta,\  \varphi_k\to\varphi \hbox{ and } \{c^k_{ij}\} \\
&&
\hbox{ with }\theta(\{c^k_{ij}\})=\theta_k \hbox{ and }\varphi_k\hbox { homogenized energy density of }\{c^k_{ij}\}\Bigr\}.
\end{eqnarray}
\end{definition}

The following theorem completely characterizes the set ${\bf H}_{\alpha,\beta}(\theta)$.

\begin{theorem}[optimal bounds]\label{main}
The elements of the set ${\bf H}_{\alpha,\beta}(\theta)$ are all even convex positively homogeneous functions of degree one $\varphi:\R^2\to[0,+\infty)$ such that
\begin{equation}
\alpha(|x_1|+|x_2|)\le \varphi(x_1,x_2)\le c_1|x_1|+ c_2|x_2|
\end{equation}
for some $c_1, c_2\le\beta$ such that
\begin{equation}
c_1+c_2= 2(\theta\beta+(1-\theta)\alpha).
\end{equation}
\end{theorem}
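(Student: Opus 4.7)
The strategy is to establish two inclusions: that every $\varphi\in\mathbf{H}_{\alpha,\beta}(\theta)$ necessarily satisfies the bounds in~(\ref{a})--(\ref{b}), and conversely that every $\varphi$ satisfying those bounds is realized by some sequence of periodic mixtures.

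For \emph{necessity of the bounds}, I argue directly from the variational definition~(\ref{phi}). For the lower bound $\varphi(\nu)\ge\alpha(|\nu_1|+|\nu_2|)$, observe that any admissible path of bonds whose endpoints satisfy $i_N-i_0=R\nu^\perp+o(R)$ must contain at least $R(|\nu_1|+|\nu_2|)+o(R)$ bonds in total: the dual interface realizes horizontal and vertical displacements of orders $R|\nu_2|$ and $R|\nu_1|$ respectively, and each unit of horizontal (resp.\ vertical) displacement is contributed by exactly one vertical (resp.\ horizontal) bond. Since each $c_{ij}\ge\alpha$, the cost is at least $\alpha R(|\nu_1|+|\nu_2|)+o(R)$. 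For the upper bound, I use straight coordinate-line test paths: a vertical interface through any fixed column uses horizontal-bond coefficients only, and averaging the per-unit cost over the $T$ possible columns yields $\varphi(e_1)\le\theta_h\beta+(1-\theta_h)\alpha$, where $\theta_h$ is the fraction of $\beta$-bonds among horizontal bonds. Symmetrically $\varphi(e_2)\le\theta_v\beta+(1-\theta_v)\alpha$ with $\theta_h+\theta_v=2\theta$ by the definition of $\theta$. Choosing $c_i\ge\varphi(e_i)$ with $c_i\le\beta$ and $c_1+c_2=2(\theta\beta+(1-\theta)\alpha)$ (possible since $\varphi(e_i)\le\beta$ and $\varphi(e_1)+\varphi(e_2)\le 2(\theta\beta+(1-\theta)\alpha)$), and using subadditivity, convexity, $1$-homogeneity, and evenness of $\varphi$, gives $\varphi(\nu)\le|\nu_1|\varphi(e_1)+|\nu_2|\varphi(e_2)\le c_1|\nu_1|+c_2|\nu_2|$.

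For \emph{realizability}, I plan first to construct the two extreme geometries and then recover an arbitrary admissible $\varphi$ by mixing them. The ``series'' cell of Fig.~\ref{series} isolates the $\beta$-bonds so that every minimal $\ell^1$ dual path can detour around each of them without cost penalty, yielding $\varphi(\nu)=\alpha(|\nu_1|+|\nu_2|)$ at any prescribed $\theta$. The ``parallel'' cell of Fig.~\ref{parallel}, consisting of horizontal bands of density $\theta_h$ and vertical bands of density $\theta_v=2\theta-\theta_h$ of $\beta$-bonds, yields $\varphi(\nu)=c_1|\nu_1|+c_2|\nu_2|$ with $c_i=\theta_i\beta+(1-\theta_i)\alpha$; every pair $(c_1,c_2)$ satisfying the constraints of~(\ref{b}) is attained by some admissible choice of $(\theta_h,\theta_v)$. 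For each construction one verifies the upper bound by exhibiting an explicit near-optimal interface and the matching lower bound by a direct count of bonds on any competing path.

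To reach an arbitrary admissible $\varphi$ I would exploit the convexity of the admissible set and the closedness of $\mathbf{H}_{\alpha,\beta}(\theta)$ under pointwise limits (the latter by a diagonal argument in the definition of $\mathbf{H}_{\alpha,\beta}(\theta)$). The plan is to realize convex combinations of already-realized densities by a two-scale mixture: given $\{c^{(1)}_{ij}\}$ and $\{c^{(2)}_{ij}\}$ with homogenized densities $\varphi_1$ and $\varphi_2$, one forms a new periodic configuration on a much larger scale in which a proportion $t$ of the cells uses $\{c^{(1)}\}$ and the rest uses $\{c^{(2)}\}$, possibly adjusted by a small amount of series-type cells to match the prescribed volume fraction exactly. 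The main obstacle is showing that the homogenized density of such a two-scale mixture is precisely $t\varphi_1+(1-t)\varphi_2$, rather than something strictly smaller obtained by an interface that zigzags through both regions to exploit each; this requires a careful scale separation and a quantitative argument that the length of interface located near the boundaries between the two kinds of cells is negligible in the limit. Once this is in place, one inductively approximates any admissible convex $\varphi$ by finite convex combinations of extreme metrics and passes to the limit in the $\Gamma$-convergence framework of Remark~\ref{gamma}.
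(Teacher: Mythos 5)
Your necessity argument is essentially the paper's and is sound: the lower bound is the ``bound by projection'' (every coefficient is at least $\alpha$ and a path spanning $R\nu^\perp+o(R)$ must contain at least $R|\nu_1|+o(R)$ bonds of one orientation and $R|\nu_2|+o(R)$ of the other), and the upper bound is the ``bound by averaging'' via coordinate test paths, combined with $\theta_h+\theta_v=2\theta$ and the estimate $\varphi(\nu)\le|\nu_1|\varphi(e_1)+|\nu_2|\varphi(e_2)$ from convexity, homogeneity and evenness.

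The realizability half has a genuine gap. Your plan is to realize the two extreme geometries and then reach a general admissible $\varphi$ as a limit of finite convex combinations $t\varphi_1+(1-t)\varphi_2$ produced by two-scale mixtures. This fails for two independent reasons. First, every density you construct directly is of the form $c_1|x_1|+c_2|x_2|$, and any convex combination of such functions is again of that form; so your scheme can never reach an admissible $\varphi$ whose set $\{\varphi\le 1\}$ is, say, a hexagon or any convex body with facets in non-coordinate directions --- and the theorem asserts that \emph{all} even convex positively homogeneous $\varphi$ between the bounds belong to ${\bf H}_{\alpha,\beta}(\theta)$. Second, the ``main obstacle'' you flag is not a technicality removed by scale separation: the homogenized surface tension of a two-scale mixture is \emph{not} $t\varphi_1+(1-t)\varphi_2$; minimal interfaces concentrate where crossing is cheap, and the limit density is given by a geodesic-type infimum formula that is in general strictly below the convex combination. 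The paper's route is different: for a polyhedral $\varphi$ whose vertices correspond to integer directions $\pm\nu^1,\dots,\pm\nu^N$, it builds (following \cite{BBF} and \cite{AI}) a periodic continuum density equal to a calibrated $c^k_1|\nu_1|+c^k_2|\nu_2|$ on the family of lines $\{t(\nu^k)^\perp:t\in\R\}+\Z^2$ and equal to the maximal density $c_1|x_1|+c_2|x_2|$ elsewhere, so that interfaces with normal $\nu^k$ follow cheap channels in direction $(\nu^k)^\perp$ while other directions pay the convexified cost; each local value is then realized discretely by the rectangular construction of Proposition \ref{special} (which matches the projection lower bound with the averaging upper bound, and in particular realizes pairs with $c_1+c_2$ \emph{strictly below} $2(\theta\beta+(1-\theta)\alpha)$), and one concludes by $\Gamma$-convergence closure and a diagonal argument. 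Without cheap channels in non-coordinate directions your construction cannot produce a general convex $\varphi$.
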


Note that the lower bound for functions in ${\bf H}_{\alpha,\beta}(\theta)$ is independent of $\beta$.
Note moreover that in the case $\theta=1$ we have all functions satisfying the trivial bound
\begin{equation}\label{tri}
\alpha(|x_1|+|x_2|)\le \varphi(x_1,x_2)\le \beta (|x_1|+|x_2|).
\end{equation}
This is due to the fact that in that case by considering $\theta_k\to 1$ we allow a vanishing 
volume fraction of $\alpha$ bonds, which is nevertheless sufficient to allow for all possible $\varphi$.

\section{Optimality of bounds}

We first give two bounds valid for every set of periodic coefficients $\{c_{ij}\}$. 

\begin{proposition}[bounds by projection]\label{bpp}
Let $\varphi$ be the homogenized energy density of $\{c_{ij}\}$; then we have
\begin{equation}\label{p-bound}
\varphi(x)\ge c^p_1|x_1|+c^p_2|x_2|,
\end{equation}
where 
\begin{equation}
c^p_1= {1\over T} \sum_{k=1}^T\min\{c_{ij}:  i_2=j_2=k\}
\end{equation}
and 
\begin{equation}
c^p_2= {1\over T} \sum_{k=1}^T\min\{c_{ij}:  i_1=j_1=k\}.
\end{equation}
\end{proposition}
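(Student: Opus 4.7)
The plan is to lower-bound, directly from the definition (\ref{phi}), the cost of any admissible path $(i_n, j_n)_{n=0}^N$ with $i_N-i_0=\nu^\perp R+o(R)$, then divide by $R$ and let $R\to\infty$; positive one-homogeneity will extend the inequality from unit vectors to arbitrary $x$. I would split the bonds of the path into \emph{horizontal} ones ($i_n-j_n=\pm e_1$) and \emph{vertical} ones ($i_n-j_n=\pm e_2$), so that the total cost splits into two contributions to be estimated separately.

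The key geometric observation is that the union of the dual unit segments forms a connected polygonal curve $\Gamma$ (the dual interface) whose endpoints, being the midpoints of the first and last bond, differ by $\nu^\perp R+O(1)$. Every vertical bond $(i,j)$ with $i_1=j_1=k$ contributes a \emph{horizontal} dual segment that crosses the vertical line $\{x_1=k\}$ exactly once at its midpoint, while every horizontal bond contributes a vertical dual segment, which crosses no such line. Since the first coordinates of the endpoints of $\Gamma$ differ by $R|\nu_2|+o(R)$, the intermediate value theorem applied to the $x_1$-coordinate along $\Gamma$ forces each of the $\sim R|\nu_2|$ integer lines $\{x_1=k\}$ strictly between the endpoint abscissae to be crossed by $\Gamma$ at least once. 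By $T$-periodicity of $m^V_k:=\min\{c_{i'j'}:i'_1=j'_1=k\}$, for every residue class $r\in\{1,\ldots,T\}$ there are then at least $\frac{R|\nu_2|}{T}+o(R)$ vertical bonds in the path with $i_1=j_1\equiv r\pmod T$.

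Using $c_{ij}\ge m^V_r$ on each such bond and summing over residue classes yields the lower bound $R|\nu_2|\cdot\frac{1}{T}\sum_{r=1}^T m^V_r+o(R)=R|\nu_2|\,c^p_2+o(R)$ for the cost of the vertical bonds in the path. The symmetric argument---counting crossings of horizontal integer lines $\{x_2=k\}$ by $\Gamma$, which come exclusively from horizontal bonds in the path---contributes $R|\nu_1|\,c^p_1+o(R)$. Summing, dividing by $R$, taking the infimum over admissible paths and letting $R\to\infty$ gives $\varphi(\nu)\ge c^p_1|\nu_1|+c^p_2|\nu_2|$ for $\nu\in S^1$; the inequality on all of $\R^2$ follows by one-homogeneity.

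The main technical point is the geometric bookkeeping of the previous paragraph: identifying the number of vertical bonds with first coordinate $k$ in the path with the number of crossings of $\{x_1=k\}$ by $\Gamma$, and applying connectivity of $\Gamma$ to force every integer line between the endpoint abscissae to be crossed at least once. Once this is settled, the arithmetic is routine: the error terms are $O(T)$, uniform in the path, and therefore vanish after division by $R$, so no minimizer of the infimum in (\ref{phi}) is needed for the argument to go through.
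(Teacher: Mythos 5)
Your proof is correct and takes essentially the same route as the paper: split the bonds of an admissible path into horizontal and vertical ones and bound each by the minimum of the coefficients on its lattice line, then average over residue classes mod $T$. The paper asserts this ``immediately follows from the definition''; your intermediate-value/crossing argument for the dual curve $\Gamma$ supplies exactly the counting step (each of the $\sim R|\nu_2|$ intermediate integer lines $\{x_1=k\}$ and each of the $\sim R|\nu_1|$ lines $\{x_2=k\}$ must be crossed at least once, by bonds of the correct type) that the paper leaves implicit.
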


\begin{proof}
The lower bound (\ref{p-bound}) immediately follows from the definition of $\varphi$, by subdividing 
the contributions of $c_{i_{n-1} i_n}$ in (\ref{phi}) into those with  $(i_n)_2=(i_{n-1})_2$ (or equivalently 
such that $i_n-i_{n-1}=\pm e_1$) and 
those with $(i_n)_1=(i_{n-1})_1$ (or equivalently  $i_n-i_{n-1}=\pm e_2$, and estimating
$$
c_{i_{n-1} i_n}\ge \min\{ c_{ij}: i_2=j_2=(i_n)_2\}
$$
and 
$$
c_{i_{n-1} i_n}\ge \min\{ c_{ij}: i_1=j_1=(i_n)_1\},
$$
respectively, in the two cases.
\end{proof}

\begin{proposition}[bounds by averaging]\label{bba}
Let $\varphi$ be the homogenized energy density of $\{c_{ij}\}$; then we have
\begin{equation}\label{p-bound2}
\varphi(x)\le c^a_1|x_1|+c^a_2|x_2|,
\end{equation}
where $c^a_1$ is the average over {\em horizontal bonds}
\begin{equation}
c^a_1= {1\over T^2} \sum\Bigl\{c_{ij}:  {i+j\over 2}\in [0,T)^2, i_2=j_2\Bigr\}
\end{equation}
and $c^a_2$ is the average over {\em vertical bonds}
\begin{equation}
c^a_2= {1\over T^2} \sum\Bigl\{c_{ij}:  {i+j\over 2}\in [0,T)^2, i_1=j_1\Bigr\}.
\end{equation}
\end{proposition}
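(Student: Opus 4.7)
The strategy is to exhibit, for each direction $\nu$, a family of admissible paths of bonds in (\ref{phi}) whose average cost is $(c_1^a|\nu_1|+c_2^a|\nu_2|)R+O(1)$; since the infimum in (\ref{phi}) lies below the average, passing to the limit as $R\to\infty$ yields (\ref{p-bound2}).

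I would first fix $\nu$ with (by symmetry) $\nu_1,\nu_2\ge 0$ and build a monotone staircase $\mathcal{S}_0$ of dual unit segments approximating the straight segment from the origin to $\nu^\perp R$. Since vertical dual segments (which realise the vertical displacements of the interface) are dual to horizontal bonds and horizontal dual segments are dual to vertical bonds, $\mathcal{S}_0$ contains exactly $\lceil \nu_1 R\rceil$ horizontal bonds and $\lceil \nu_2 R\rceil$ vertical bonds, and its endpoints satisfy $i_N-i_0=\nu^\perp R+O(1)$. For $\tau\in\{0,1,\dots,T-1\}^2$ let $\mathcal{S}_\tau$ be the translate of $\mathcal{S}_0$ by $\tau$ (i.e. each bond $(i,j)$ of $\mathcal{S}_0$ is replaced by $(i+\tau,j+\tau)$); this is still an admissible path of the same combinatorial type, since translation by an element of $\Z^2$ preserves both the dual-adjacency of consecutive bonds and the displacement $i_N-i_0$.

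The key algebraic identity, which follows from periodicity because $\tau$ runs through every residue class in $\Z^2/T\Z^2$ exactly once as $\tau$ ranges over $\{0,\dots,T-1\}^2$, is that for every $i\in\Z^2$,
$$
\sum_\tau c_{i+\tau,\,i+\tau+e_1}=T^2 c_1^a,\qquad \sum_\tau c_{i+\tau,\,i+\tau+e_2}=T^2 c_2^a.
$$
Summing the cost of $\mathcal{S}_\tau$ over $\tau$ and swapping the order of summation yields averaged cost $\lceil\nu_1 R\rceil c_1^a+\lceil\nu_2 R\rceil c_2^a$, so some $\tau^*$ achieves cost at most this value; dividing by $R$ and letting $R\to\infty$ gives the claim. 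The only delicate point I foresee is bookkeeping: one must set up the staircase so that its bonds satisfy the dual-adjacency requirement appearing in the definition of $\varphi$, and verify that translation preserves both this adjacency and the prescribed displacement. Both facts follow from the observation that integer translation is a lattice symmetry that commutes with the dual-segment construction.
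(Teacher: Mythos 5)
Your argument is correct, and it proves the proposition, but it runs along a different track from the paper's. The paper does not use a staircase or an averaging over all $T^2$ translations: it takes an L-shaped competitor consisting of one long horizontal run of the interface (crossing only vertical bonds lying between two fixed adjacent rows) followed by one long vertical run (crossing only horizontal bonds between two fixed adjacent columns), and the pigeonhole is applied only over the $T$ choices of row and the $T$ choices of column, selecting a row whose average vertical-bond coefficient is at most $c^a_2$ and a column whose average horizontal-bond coefficient is at most $c^a_1$. Your version replaces this with a single staircase path hugging the segment in direction $\nu^\perp$ and averages its cost over a full fundamental domain of translations, using the identity $\sum_\tau c_{i+\tau,\,i+\tau+e_1}=T^2c^a_1$ (and its vertical analogue), which is valid since $\tau$ sweeps out every residue class modulo $T\Z^2$ exactly once. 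What your route buys is a competitor that stays within bounded distance of the straight interface and a pigeonhole that does not require singling out a distinguished row or column; what the paper's route buys is a shorter bookkeeping step, since along each leg of the L the relevant coefficients form a single periodic sequence whose Ces\`aro mean is computed directly, with no need to verify that a staircase of dual segments satisfies the adjacency condition or to count its horizontal and vertical steps. Both arguments rest on the same two pillars, namely that the infimum in (\ref{phi}) is bounded by the cost of any admissible path and that periodicity converts a sum over a long path into a multiple of the cell averages, so I regard yours as a correct alternative proof rather than a reconstruction of the paper's.
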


\begin{proof}
The proof is obtained by construction of a suitable competitor $\{i_n,j_n\}$ for the characterization 
(\ref{phi}) of $\varphi$. To that end let $n_1, n_2\in\{1,\ldots, T\}$ be such that 
$$
{1\over T}\sum_{k=1}^T c_{(n_1-1,k), (n_1, k)}\le
{1\over T^2} \sum\Bigl\{c_{ij}:  {i+j\over 2}\in [0,T)^2, i_2=j_2\Bigr\}
$$
and 
$$
{1\over T}\sum_{k=1}^T c_{(k,n_2-1), (k,n_2)}\le
{1\over T^2} \sum\Bigl\{c_{ij}:  {i+j\over 2}\in [0,T)^2, i_1=j_1\Bigr\}.
$$
Up to a translation, we may suppose that $n_1=n_2=1$.
It is not restrictive to suppose that $\nu_1\ge0$ and $\nu_2\ge 0$.
We define $i_0= (\lfloor R\nu_2\rfloor,0)$
and $i_N=(0,\lfloor R\nu_1\rfloor)$. It suffices then to take in Definition \ref{dephi} 
the path of bonds $\{i_n,j_n\}$ obtained by concatenating the two paths of bonds defined by
$$
i^1_n=(\lfloor R\nu_2\rfloor-n, 0), \qquad j^1_n=(\lfloor R\nu_2\rfloor-n, 1),\qquad n=0, \ldots, \lfloor R\nu_2\rfloor-1
$$
and
$$
i^2_n=(0,n), \qquad j^2_n=(1, n),\qquad n=1, \ldots, \lfloor R\nu_1\rfloor\,.
$$
We then have 
\begin{eqnarray*}
\lim_{R\to+\infty}{1\over R}\Bigl(\sum_{n=1}^{\lfloor R\nu_2\rfloor} c_{(n,0)\,(n,1)}+
\sum_{n=1}^{\lfloor R\nu_1\rfloor} c_{(0,n)\,(1,n)}\Bigr)=|\nu_2|{1\over T} \sum_{n=1}^T c_{(n,0)\,(n,1)}+
|\nu_1|{1\over T} \sum_{n=1}^T c_{(0,n)\,(1,n)},
\end{eqnarray*}
and the desired inequality.
\end{proof}

We now specialize the previous bound to mixtures of two types of bonds.
Given $\{c_{ij}\}$ satisfying {\rm(\ref{abe})} we define the {\em volume fraction of horizontal $\beta$-bonds} as 
\begin{equation}\label{th}
\theta_h(\{c_{ij}\})={1\over T^2}\#\Bigl\{(i,j)\in {\cal Z}: {i+i\over 2}\in [0,T)^2,\  c_{ij}=\beta, i_2=j_2\Bigr\}.
\end{equation}
and the {\em volume fraction of vertical $\beta$-bonds} as 
\begin{equation}\label{tv}
\theta_v(\{c_{ij}\})={1\over T^2}\#\Bigl\{(i,j)\in {\cal Z}: {i+i\over 2}\in [0,T)^2,\  c_{ij}=\beta, i_1=j_1\Bigr\}.
\end{equation}
Note that
\begin{equation}\label{thh}
{\theta_h(\{c_{ij}\})+\theta_v(\{c_{ij}\})\over 2}= \theta(\{c_{ij}\}).
\end{equation}

\begin{proposition} Let $\{c_{ij}\}$ satisfy {\rm(\ref{abe})}, let $\theta_h=\theta_h(\{c_{ij}\})$
and $\theta_v=\theta_v(\{c_{ij}\})$, and let $\varphi$ be the homogenized energy density of $\{c_{ij}\}$. 
Then 
\begin{equation}
\varphi(\nu)\le (\theta_h\beta+(1-\theta_h)\alpha)|\nu_1|+ (\theta_v\beta+(1-\theta_v)\alpha)|\nu_2|
\end{equation}
\end{proposition}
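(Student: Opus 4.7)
The plan is to reduce this claim to a direct application of Proposition \ref{bba} (bounds by averaging) followed by a counting argument that identifies $c^a_1$ and $c^a_2$ with the claimed affine combinations of $\alpha$ and $\beta$.

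First I would invoke Proposition \ref{bba} verbatim, which gives the bound
$$
\varphi(\nu)\le c^a_1 |\nu_1|+c^a_2|\nu_2|,
$$
where $c^a_1$ and $c^a_2$ are the averages of the coefficients $c_{ij}$ over horizontal and vertical bonds, respectively, with midpoint in $[0,T)^2$. Since the hypothesis (\ref{abe}) forces each $c_{ij}$ to equal $\alpha$ or $\beta$, I would then partition the sum defining $c^a_1$ according to whether a bond is an $\alpha$-bond or a $\beta$-bond:
$$
\sum\Bigl\{c_{ij}: {i+j\over 2}\in[0,T)^2,\ i_2=j_2\Bigr\}=\beta\cdot N_h^\beta+\alpha\cdot N_h^\alpha,
$$
where $N_h^\beta$ and $N_h^\alpha$ count horizontal $\beta$- and $\alpha$-bonds with midpoint in $[0,T)^2$. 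By the definition (\ref{th}) we have $N_h^\beta=\theta_h T^2$, and since the total number of horizontal bonds with midpoint in the cell equals $T^2$ (so that $\theta_h\in[0,1]$), necessarily $N_h^\alpha=(1-\theta_h)T^2$. Dividing by $T^2$ gives
$$
c^a_1=\theta_h\beta+(1-\theta_h)\alpha.
$$
The same computation using (\ref{tv}) yields $c^a_2=\theta_v\beta+(1-\theta_v)\alpha$, and substitution into the bound from Proposition \ref{bba} produces the statement.

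There is no real obstacle here: the proposition is essentially a specialization of Proposition \ref{bba} to two-valued coefficients, and the content is the bookkeeping identification of the averages with convex combinations of $\alpha$ and $\beta$. The only point worth double-checking is the normalization convention in (\ref{th})--(\ref{tv}): the factor $1/T^2$ (rather than $1/(2T^2)$) is consistent with $\theta_h$ and $\theta_v$ being genuine fractions in $[0,1]$, and also with the compatibility relation (\ref{thh}) linking them to the global volume fraction $\theta$.
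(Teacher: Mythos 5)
Your proposal is correct and is exactly the paper's argument: the paper's proof consists of the single sentence ``It suffices to rewrite $c^a_1$ and $c^a_2$ given by the previous proposition using (\ref{th}) and (\ref{tv}),'' and you have simply carried out that rewriting explicitly via the $\alpha$/$\beta$ counting. Your remark on the $1/T^2$ normalization being the one compatible with $\theta_h,\theta_v\in[0,1]$ and with (\ref{thh}) is a sensible sanity check but adds nothing beyond what the paper intends.
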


\begin{proof}
It suffices to rewrite $c^a_1$ and $c^a_2$ given by the previous proposition using (\ref{th}) and (\ref{tv}).
\end{proof}

The previous proposition, together with (\ref{thh}) and the trivial bound (\ref{tri})
gives the bounds in the statement of Theorem \ref{main}. We now prove their optimality.
First we deal with a special case, from which the general result will be deduced by
approximation.

\begin{proposition}\label{special}
Let $$\varphi(\nu)= c_1|\nu_1|+c_2|\nu_2|$$
with $\alpha\le c_1, c_2\le \beta$ and 
\begin{equation}
c_1+c_2\le 2(\beta\theta+ (1-\theta) \alpha)
\end{equation} 
for some $\theta\in (0,1)$. Then $\varphi\in {\bf H}_{\alpha,\beta}(\theta)$.
\end{proposition}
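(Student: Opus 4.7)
The plan is to construct, for each admissible triple $(c_1,c_2,\theta)$, a sequence of periodic mixtures $\{c^k_{ij}\}$ whose homogenized densities $\varphi_k$ converge pointwise to $\varphi$ and whose volume fractions $\theta_k$ converge to $\theta$. The construction is a ``highway'' generalisation of the parallel geometry of Fig.~\ref{parallel}: set $\theta_h:=(c_1-\alpha)/(\beta-\alpha)$, $\theta_v:=(c_2-\alpha)/(\beta-\alpha)$ (both in $[0,1]$), and $K:=(2\beta-c_1-c_2)/(2(1-\theta)(\beta-\alpha))$. A direct algebraic manipulation shows that the hypothesis $c_1+c_2\le 2(\beta\theta+(1-\theta)\alpha)$ is equivalent to $K\ge 1$. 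For integer approximations $K_k\to K$, $m_k/T_k\to\theta_h$, $n_k/T_k\to\theta_v$ with $K_k\mid T_k$ and $T_k\to\infty$, I group the horizontal bonds of the $T_k\times T_k$ periodicity cell into columns (indexed by the $x$-coordinate of the left endpoint) and declare every $K_k$-th column a \emph{vertical highway} on which the horizontal bonds carry the row pattern $\beta$ for $r\in\{1,\ldots,m_k\}$ and $\alpha$ otherwise, while every horizontal bond in an off-highway column is set to $\beta$. I do the symmetric thing with $n_k$ for vertical bonds and horizontal highways. A direct $\beta$-bond count gives the volume fraction $\theta_k=1-[(1-m_k/T_k)+(1-n_k/T_k)]/(2K_k)\to\theta$.

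For the upper bound I would use the L-shaped competitor from the proof of Proposition~\ref{bba}, with the translation parameter chosen so that the vertical arm of the L lies on a vertical highway and the horizontal arm on a horizontal highway. The sum $\sum_{r=1}^{\lfloor R|\nu_1|\rfloor} c_{(i,r)(i+1,r)}$ along a highway column converges to $R|\nu_1|[(m_k/T_k)\beta+(1-m_k/T_k)\alpha]+o(R)$, and analogously for the horizontal arm, yielding $\varphi_k(\nu)\le c_{1,k}|\nu_1|+c_{2,k}|\nu_2|+o(1)$ with $c_{i,k}\to c_i$. The matching lower bound follows at once from Proposition~\ref{bpp}: in each row $r\in\{1,\ldots,m_k\}$ every horizontal bond (highway or not) is $\beta$, while in every other row the highway columns supply $\alpha$-bonds, so the row-wise minimum equals $\beta$ in $m_k$ rows and $\alpha$ in $T_k-m_k$ rows. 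Averaging yields $c_1^p=(m_k/T_k)\beta+(1-m_k/T_k)\alpha\to c_1$, and symmetrically $c_2^p\to c_2$; Proposition~\ref{bpp} then delivers $\varphi_k(\nu)\ge c_1|\nu_1|+c_2|\nu_2|+o(1)$.

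The main obstacle is the design choice rather than a hard estimate: the extra $\beta$-bonds needed to push the volume fraction above the tight value $(\theta_h+\theta_v)/2$ of the pure parallel geometry must be placed so as neither to raise any row-wise or column-wise minimum (which would strengthen the projection lower bound past $c_1|\nu_1|+c_2|\nu_2|$) nor to block a cheap highway interface (which would strengthen the averaging upper bound). Concentrating them on the off-highway columns and rows resolves both issues simultaneously: the $\alpha$-bonds surviving on the highways keep the row and column minima at $\alpha$, and the highway itself supports an L-shaped interface of average cost exactly $c_{1,k}|\nu_1|+c_{2,k}|\nu_2|$. The hypothesis $c_1+c_2\le 2(\beta\theta+(1-\theta)\alpha)$ is precisely what guarantees $K\ge 1$, i.e.\ that at least one highway per period can be accommodated; when equality holds we have $K=1$, every column and row is a highway, and the construction reduces to the plain parallel geometry of Fig.~\ref{parallel}.
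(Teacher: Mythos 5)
Your construction is, in substance, the same as the paper's: a lower bound via Proposition \ref{bpp} by arranging that the row (resp.\ column) minima equal $\beta$ on a fraction $t_1=(c_1-\alpha)/(\beta-\alpha)$ of the rows and $\alpha$ elsewhere, a matching upper bound via the L-shaped competitor of Proposition \ref{bba} run along a column/row whose average cost is exactly $c_1$ (resp.\ $c_2$), and the surplus $\beta$-bonds needed to reach volume fraction $\theta$ dumped where they raise no minimum and cross no cheap path. The paper's geometry is the degenerate case of yours with a single ``highway'' column at $j=0$: it prescribes only that column plus the first $N_1$ rows, and places the remaining $\beta$-bonds \emph{arbitrarily} subject to the volume-fraction constraint; your explicit periodic highway pattern is one admissible such filling.

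Two small points. First, as written the step ``integer approximations $K_k\to K$'' is vacuous when $K=(2-t_1-t_2)/(2(1-\theta))$ is not an integer, since integers cannot converge to it; the fix is immediate (only the \emph{number} of highway columns matters, so take $\lfloor T_k/K\rfloor$ of them per period, placed anywhere, rather than insisting on equal spacing $K_k$ with $K_k\mid T_k$), but you should say so, or else adopt the paper's device of leaving the placement of the excess $\beta$-bonds free. Second, the paper first shrinks to the open set of parameters ($\alpha<c_1,c_2<\beta$, strict inequality in the constraint) and recovers the closure by approximation, whereas your construction handles the boundary cases ($c_i\in\{\alpha,\beta\}$, $K=1$) directly; that is a mild simplification and is consistent with the definition of ${\bf H}_{\alpha,\beta}(\theta)$, which only requires $\theta_k\to\theta$ and $\varphi_k\to\varphi$.
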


\begin{proof}
The case $\theta=1$ is trivial. In the other cases, since the set of $(c_1,c_2)$ as above
coincides with the closure of its interior, by approximation it suffices to consider the case when 
indeed
\begin{equation}
\alpha<c_1,c_2<\beta,\qquad c_1+c_2< 2(\beta\theta+ (1-\theta) \alpha).
\end{equation}
In particular, we can find $\theta_1\in (0,1)$ and $\theta_2\in(0,1)$ such that 
$\theta_1+\theta_2=2\theta$ and 
\begin{equation}
c_1<\beta\theta_1+ (1-\theta_1) \alpha),\qquad c_2<  \beta\theta_2+ (1-\theta_2) \alpha .
\end{equation}
We then write 
\begin{equation}
c_1=\beta t_1+ (1-t_1) \alpha),\qquad c_2=  \beta t_2+ (1-t_2) \alpha .
\end{equation}
for some $t_1<\theta_1$ and  $t_2<\theta_2$.

We construct $\{c_{ij}\}$ with period $T\in\N$ and with
$$
\theta_h(\{c_{ij}\})=\theta_1, \qquad \theta_v(\{c_{ij}\})=\theta_2
$$
by defining separately the horizontal and vertical bonds.
Upon an approximation argument we may suppose that $N_i= t_i T\in\N$, and that $T^2\theta_i\in \N$ for $i=1,2$. We only describe the construction for the horizontal bonds. 
We define
$$
c_{(j,n), (j+1,n)}=\begin{cases} \beta  &\hbox{ if $j=1,\ldots, T$ and $n=1,\ldots N_1$}\cr
\alpha &\hbox{  if $j=0$ and $n=N_1+1,\ldots T$}
\end{cases}
$$
and any choice of $\alpha$ and $\beta$ for other indices $i,j$, only subject to the 
total constraint that $\theta_h(\{c_{ij}\}=\theta_1$.
With this choice of $c_{ij}$ we have 
$$
\min\{ c_{ij}: i_2=j_2=n\}=\begin{cases} \beta  &\hbox{ if $n=1,\ldots N_1$}\cr
\alpha &\hbox{  if  $n=N_1+1,\ldots T$}
\end{cases}
$$
The analogous construction for vertical bonds gives
$$
\min\{ c_{ij}: i_1=j_1=n\}=\begin{cases} \beta  &\hbox{ if $n=1,\ldots N_2$}\cr
\alpha &\hbox{  if  $n=N_2+1,\ldots T$}
\end{cases}
$$
Then, Proposition \ref{bpp} gives that the homogenized energy density of $\{c_{ij}\}$ satisfies
$$
\varphi(\nu)\ge c_1|\nu_1|+c_2|\nu_2|.
$$

To give a lower bound we use the same construction of the proof of Proposition \ref{bba}, after noticing that the vertical and horizontal paths with $i^1_n=(0,n)$, $j^1_n=(1,n)$ or  $i^2_n=(n,0)$, $j^2_n=(n,1)$ are such that
$$
{1\over T}\sum_{n=1}^T c_{i^1_n,j^1_n}= c_1,
\qquad
{1\over T}\sum_{n=1}^T c_{i^2_n,j^2_n}= c_2.
$$
In this way we obtain the estimate
$$
\varphi(\nu)\le c_1|\nu_1|+c_2|\nu_2|.
$$
and hence the desired equality.
\end{proof}

\begin{proof}[Proof of Theorem \rm\ref{main}]
In order to conclude the proof we will use the energy densities obtained in the previous proposition to approximate all $\varphi$ satisfying the bounds. In order to do this, we note that, thanks to Remark \ref{gamma} we may use the fact that the class of functionals with integrands in ${\bf H}_{\alpha,\beta}(\theta)$ is closed under $\Gamma$-convergence. Hence, it is not restrictive to make some simplifying hypotheses on the function $\varphi$.

We may suppose that 
$$
\alpha (|\nu_1|+|\nu_2|)< \varphi(\nu)< (\beta \theta_1+ (1-\theta_1)\alpha)|\nu_1|+
(\beta \theta_2+ (1-\theta_2)\alpha)|\nu_2|=: c_1|\nu_1|+
c_2|\nu_2|
$$
for some $\theta_1,\theta_2\in (0,1)$, and that

$\bullet$ the set $\{x: \varphi(x)\le 1\}$ is a convex symmetric polyhedron with vertices corresponding to integer directions $\pm\nu^1,\ldots,\pm\nu^N$; i.e. such that there exist $r^j\in\R$ such that $r^j\nu^j\in\Z^d$.

The surface energy related to such $\varphi$ can be obtained as a $\Gamma$-limit of energies of the form
$$
F_\e(u)=\int_{\partial\{u=1\}}f\Bigl({x\over\e},\nu\Bigr)d{\cal H}^1,
$$
where $f(\cdot, \nu)$ is $1$-periodic and has the form
$$
f(y,\nu)= \begin{cases}
\varphi(\nu^k) & \hbox{ if } y\in \{ t(\nu^k)^\perp: t\in \R\}+\Z^2, \ k=1,\ldots, N
\\
c_1|\nu_1|+
c_2|\nu_2| & \hbox{ otherwise}
\end{cases}
$$
This can be proved as in \cite{BBF} or \cite{AI}, whose construction, where we have 
$\beta$ in place of $c_1|\nu_1|+c_2|\nu_2|$, works also in this case. 

Note that we can rewrite the values
$$
\varphi(\nu^k)=  (\beta \theta^k_1+ (1-\theta^k_1)\alpha)|\nu^k_1|+
(\beta \theta^k_2+ (1-\theta^k_2)\alpha)|\nu^k_2|=: c^k_1|\nu^k_1|+
c^k_2|\nu^k_2|
$$
with $c^k_1, c^k_2$ satisfying 
$$
c^k_1+c^k_2\le c_1+c_2.
$$
We can therefore consider equivalently
$$
f(y,\nu)= \begin{cases}
c^k_1|\nu_1|+c^k_2|\nu_2| & 
\hbox{ if } y\in \{ t(\nu^k)^\perp: t\in \R\}+\Z^2, \ k=1,\ldots, N
\\
c_1|\nu_1|+
c_2|\nu_2| & \hbox{ otherwise}.
\end{cases}
$$
Note in fact that the normal to any $\partial \{u=1\}$ will be equal to $\nu^k$ ${\cal H}^1$-a.e.~on $ \{ t(\nu^k)^\perp: t\in \R\}+\Z^2$. This shows that $f(y,\cdot)\in {\bf H}_{\alpha,\beta}(\theta)$ for 
${\cal H}^1$-a.a.$y$, and is of the form considered in Proposition \ref{special}.

By a further approximation argument the metrics related to such $f$ can be approximated 
by a sequence 
$$
f^\delta(y,\nu)= \begin{cases}
c^k_1|\nu_1|+c^k_2|\nu_2| & 
\hbox{ if dist} (y, \{ t(\nu^k)^\perp: t\in \R\}+\Z^2)\le\delta\\
& \hbox{ and dist} (y, \{ t(\nu^j)^\perp: t\in \R\}+\Z^2)>\delta,\ j\neq k,  \ k=1,\ldots, N
\\
c_1|\nu_1|+
c_2|\nu_2| & \hbox{ otherwise}.
\end{cases}
$$

By localizing the construction in Proposition \ref{special} we can find $c^{\delta,\eta}_{ij}$ such that
$$
E^\eta(u)={1\over 8}\sum_{ij}\eta c^{\delta,\eta}_{ij} (u_i-u_j)^2 \qquad u:\eta\Z^2\to \{\pm1\}
$$
$\Gamma$-converges as $\eta\to0$ to
$$
\int_{\partial\{u=1\}} f^\delta(x,\nu)d{\cal H}^1
$$
Furthermore, $c^{\delta,\eta}_{ij}$ can be taken periodic of period $1/\eta$ (which we may suppose being integer) and with horizontal and vertical volume fractions $\theta_1$ and $\theta_2$, respectively.

By a diagonal argument this proves the theorem.
\end{proof}

\section{Conclusion and perspectives}
The main purpose of this paper has been the construction of discrete microgeometries, that allow the computation of optimal bounds for mixtures of ferromagnetic interaction. To that end we have dealt with the simplest two-dimensional nearest-neighbour setting. There are several extensions of this results: to higher dimension (where the results will be different for length energies and for discrete perimeter functionals); to energies with long-range interactions (for example for nearest and next-to-nearest interactions, where a multi-scale approach can be necessary); to the computations of the {\em G-closure} of mixtures (i.e., all possible limits of mixtures and not only periodic ones, which, nevertheless, can be reduced to the optimal bounds for periodic mixtures by the {\em localization principle} of Dal Maso and Kohn), to other lattices (e.g., the triangular lattice), etc.

\section*{Acknowledgments}
The author gratefully acknowledge the hospitality of the Mathematical Institute in Oxford and the financial support of the EPSRC Science and Innovation award to the Oxford Centre for Nonlinear PDE (EP/E035027/1).

\end{document}